\documentclass[letterpaper, 10 pt, conference]{ieeeconf}  

\IEEEoverridecommandlockouts                              

\usepackage{soul}
\usepackage{amsmath,amssymb,amsfonts}
\usepackage{graphicx}
\usepackage{booktabs}
\usepackage{algorithm}
\usepackage{algpseudocode}
\usepackage{xcolor}
\usepackage{cite}
\usepackage{hyperref}
\newcommand{\Xone}{\mathrm{X}_1}
\newcommand{\Proj}[1]{\Pi_{\Xone}\!\left(#1\right)}

\newtheorem{problem}{Problem}
\newtheorem{theorem}{Theorem}  

\newtheorem{definition}{Definition}

\newtheorem{innerassump}{Assumption}
\newenvironment{assump}[1]
  {\renewcommand\theinnerassump{#1.\arabic{innerassump}}\innerassump}
  {\endinnerassump}

\title{\LARGE \bf
Asymmetric Nash Seeking via Best--Response Maps: Global Linear Convergence and Robustness to Inexact Reaction Models
}

\author{Mahdis Rabbani$^{1}$, Navid Mojahed$^{1}$, and Shima Nazari$^{1}$
\thanks{$^{1}$Authors are with Department of Mechanical and Aerospace Engineering, University of California, Davis, One Shield Ave, Davis, CA 95616.
        {\tt\small \{mrabbani, nmojahed, snazari\}@ucdavis.edu}}%
}

\begin{document}

\maketitle
\thispagestyle{empty}
\pagestyle{empty}



\newcommand{\R}{\mathbb{R}}
\newcommand{\norm}[1]{\left\lVert #1 \right\rVert}
\newcommand{\grad}{\nabla}
\newcommand{\argmin}{\operatorname*{arg\,min}}
\newcommand{\BR}{\mathsf{BR}}
\newcommand{\prox}{\operatorname*{prox}}

\begin{abstract}

Nash equilibria provide a principled framework for modeling interactions in multi-agent decision-making and control. However, many equilibrium-seeking methods implicitly assume that each agent has access to the other agents’ objectives and constraints, an assumption that is often unrealistic in practice. This letter studies a class of asymmetric-information two-player constrained games with decoupled feasible sets, in which Player~1 knows its own objective and constraints while Player~2 is available only through a best-response map. For this class of games, we propose an asymmetric projected gradient descent--best response iteration that does not require full mutual knowledge of both players’ optimization problems. Under suitable regularity conditions, we establish existence and uniqueness of the Nash equilibrium and prove global linear convergence of the proposed iteration when the best-response map is exact. Recognizing that best-response maps are often learned or estimated, we further analyze the inexact case and show that, when the approximation error is uniformly bounded by $\varepsilon$, the iterates enter an explicit $O(\varepsilon)$ neighborhood of the true Nash equilibrium. Numerical results on a benchmark game corroborate the predicted convergence behavior and error scaling.

\end{abstract}

\section{Introduction}
\label{sec:introduction}

Strategic decision-making in multi-agent systems is central to autonomy, robotics, and networked control, where agents pursue individual objectives while influencing one another through coupled decisions and shared environments \cite{ChenRen2019MASSurvey,ZhangEtAl2020NCSurvey,WangEtAl2019GameTheoreticPlanning,TurnwaldEtAl2016HumanAvoidance}. In such settings, Nash equilibrium (NE) provides a principled notion of interaction-consistent behavior and underlies a broad class of game-theoretic planning and control methods \cite{Rosen1965,WangEtAl2019GameTheoreticPlanning,TurnwaldEtAl2016HumanAvoidance}. This viewpoint is especially appealing in interaction-aware autonomy, where mutual adaptation arises naturally in scenarios such as lane changes, merges, overtaking, and human--robot interaction \cite{LeCleacH2020ALGAMES,WangEtAl2019GameTheoreticPlanning,JiEtAl2021StackelbergLaneMerging,TurnwaldEtAl2016HumanAvoidance}.

A central limitation of many equilibrium-based methods is their information assumption. Classical equilibrium analysis and computation typically assume that the game is explicitly known, including all players' objectives and feasible sets \cite{Rosen1965,FacchineiPang2003,HarkerPang1990,FerrisMunson2000}. This assumption is built into both foundational theory and practical game solvers, including VI/KKT-based approaches and modern constrained dynamic-game solvers such as ALGAMES and DG-SQP \cite{FacchineiPang2003,HarkerPang1990,LeCleacH2020ALGAMES,ZhuBorrelli2024DGSQP}. A complementary fixed-point viewpoint leads to iterative best-response (IBR) schemes and related Jacobi/Gauss--Seidel updates \cite{LeiEtAl2020BR,WilliamsEtAl2018BRMPC,WangEtAl2019GameTheoreticPlanning}. However, both perspectives still rely on access to either the full joint game model or repeatedly evaluable opponent best responses, which is restrictive in interactive settings where surrounding agents are observed only through behavior rather than through their internal optimization problems \cite{LeiEtAl2020BR,LiuEtAl2023UnknownObjectives}.

To address this asymmetry, recent efforts have sought to recover missing opponent information from data, for example by learning latent objectives or adapting opponent models online \cite{AllenEtAl2022IOGNEP,LiuEtAl2023UnknownObjectives,soltanian2025pace}. In contrast, we study a class of asymmetric-information two-player constrained games with decoupled feasible sets, in which Player~1 knows its own objective and constraints, while Player~2 is represented only through a best-response map \cite{RabbaniEtAl2026BRMaps}. Thus, Player~2's objective, model, and constraints need not be explicitly accessible. Importantly, the best-response map is treated here as an \emph{information structure}, not as a leader--follower reformulation; our goal is to characterize and seek the simultaneous-move Nash equilibrium of the original game under asymmetric information, rather than a Stackelberg solution \cite{JiEtAl2021StackelbergLaneMerging,BatemanEtAl2024NashOrStackelberg}.
The main contributions of this paper are as follows:
\begin{itemize}
    \item We introduce a class of asymmetric-information two-player constrained games with decoupled feasible sets, in which the opponent is represented directly through a best-response map rather than an explicit objective model, and establish existence and uniqueness of Nash equilibrium for this class under regularity assumptions.
    \item We propose an asymmetric projected gradient descent--best response iteration and show that, under stronger regularity assumptions, it converges globally and linearly to the unique Nash equilibrium.
    \item We analyze the practically important case of an approximate best-response map and prove robustness: under a uniform approximation error bound, the proposed iteration enters an explicit $O(\varepsilon)$ neighborhood of the true Nash equilibrium.
\end{itemize}

Taken together, these results turn the best-response-map viewpoint from a modeling abstraction into a provably convergent equilibrium-seeking framework for constrained games with asymmetric information.

\section{Problem Formulation}
\label{sec:problem}

We consider a constrained two-player game with decision variables
$x_1 \in \mathrm{X}_1 \subset \mathbb{R}^{n_1}$ and
$x_2 \in \mathrm{X}_2 \subset \mathbb{R}^{n_2}$ for Player~1 and Player~2, respectively.
The feasible sets are decoupled, so the joint feasible set is
\begin{equation*}
\label{eq:cartesian_set}
\mathrm{X} \triangleq \mathrm{X}_1 \times \mathrm{X}_2,
\end{equation*}
and there is no explicit coupling through shared constraints. Thus, the interaction enters through the cost functions and the induced reaction behavior rather than through joint feasibility conditions.

Player~1 is described by a cost function
\begin{equation}
\label{eq:J1_def}
J_1:\mathrm{X}_1\times\mathrm{X}_2 \to \mathbb{R},
\qquad
(x_1,x_2)\mapsto J_1(x_1,x_2),
\end{equation}
which may depend on both players' decisions.
In contrast, Player~2 is represented, from the standpoint of Player~1, through its best-response correspondence
\begin{equation}
\label{eq:BR2_def}
\mathrm{BR}_2:\mathrm{X}_1 \rightrightarrows \mathrm{X}_2,
\quad
\mathrm{BR}_2(x_1) \triangleq \arg\min_{x_2\in\mathrm{X}_2} J_2(x_1,x_2),
\end{equation}
for some objective $J_2$ that need not be known to Player~1.
Accordingly, the algorithmic development relies only on access to $\mathrm{BR}_2(\cdot)$, or later, an approximation thereof.

\begin{problem}[\textit{Asymmetric-Information Two-Player Game}]
\label{prob:asym_game}
Given the feasible sets $\mathrm{X}_1,\mathrm{X}_2$, Player~1's cost $J_1$ in \eqref{eq:J1_def}, and Player~2's best-response correspondence $\mathrm{BR}_2(\cdot)$ in \eqref{eq:BR2_def}, consider the two-player game in which Player~1 solves
\[
\min_{x_1\in\mathrm{X}_1} J_1(x_1,x_2),
\]
while Player~2 simultaneously reacts according to $x_2 \in \mathrm{BR}_2(x_1)$.
\end{problem}

\begin{definition}[Nash equilibrium]
\label{def:nash}
A pair $(x_1^\star,x_2^\star)\in\mathrm{X}_1\times\mathrm{X}_2$ is a Nash equilibrium of Problem~\ref{prob:asym_game} if
\begin{equation*}
\label{eq:nash_def}
x_1^\star \in \arg\min_{x_1\in\mathrm{X}_1} J_1(x_1,x_2^\star),
\qquad
x_2^\star \in \mathrm{BR}_2(x_1^\star).
\end{equation*}
\end{definition}

The next section characterizes conditions under which Problem~\ref{prob:asym_game} admits a Nash equilibrium and when that equilibrium is unique.

\section{Existence \& Uniqueness of the Equilibrium}
\label{sec:exist_unique}

This section investigates existence and uniqueness of a Nash equilibrium for Problem~\ref{prob:asym_game}. We first establish existence under general regularity conditions that allow Player~2's best-response map to be set-valued, and then impose stronger assumptions under which the best response becomes single-valued and the equilibrium is unique.

\begin{assump}{A}
\label{assum:X_sets}
The feasible sets $\mathrm{X}_1$ and $\mathrm{X}_2$ are nonempty, convex, and compact.
\end{assump}

\begin{assump}{A}
\label{assum:J1_cont_convex}
The function $J_1:\mathrm{X}_1\times\mathrm{X}_2\to\mathbb{R}$ is continuous on $\mathrm{X}_1\times\mathrm{X}_2$. Moreover, for every fixed $x_2\in\mathrm{X}_2$, the function $J_1(\cdot,x_2)$ is convex on $\mathrm{X}_1$.
\end{assump}

\begin{assump}{A}
\label{assum:BR2_regular}
For every $x_1\in \mathrm{X}_1$, the correspondence $\mathrm{BR}_2(x_1)\subseteq \mathrm{X}_2$ is nonempty, convex, and compact, and $\mathrm{BR}_2:\mathrm{X}_1\rightrightarrows \mathrm{X}_2$ is upper hemicontinuous.
\end{assump}

\vspace{0.3em}
\begin{theorem}[Existence of Nash equilibrium]
\label{thm:existence}
Under Assumptions~\ref{assum:X_sets}--\ref{assum:BR2_regular}, Problem~\ref{prob:asym_game} admits at least one Nash equilibrium.
\end{theorem}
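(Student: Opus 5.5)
The plan is to apply Kakutani's fixed-point theorem to the joint best-response correspondence on the compact convex set $\mathrm{X}=\mathrm{X}_1\times\mathrm{X}_2$. Define
\[
\mathrm{BR}_1(x_2)\triangleq \arg\min_{x_1\in\mathrm{X}_1} J_1(x_1,x_2),
\qquad
\Phi(x_1,x_2)\triangleq \mathrm{BR}_1(x_2)\times \mathrm{BR}_2(x_1).
\]
By Definition~\ref{def:nash}, any fixed point $(x_1^\star,x_2^\star)\in\Phi(x_1^\star,x_2^\star)$ is a Nash equilibrium of Problem~\ref{prob:asym_game}. Kakutani requires three things: $\mathrm{X}$ nonempty, convex and compact; $\Phi$ with nonempty, convex, compact values; and $\Phi$ with closed graph (equivalently, upper hemicontinuity with closed values into a compact set). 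The set conditions follow from Assumption~\ref{assum:X_sets}.

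First I will treat $\mathrm{BR}_1$. For fixed $x_2$, $J_1(\cdot,x_2)$ is continuous on the compact set $\mathrm{X}_1$ (Assumption~\ref{assum:J1_cont_convex}), so Weierstrass gives a nonempty argmin set. This set is a closed subset of $\mathrm{X}_1$, since it is the preimage of the minimum value under a continuous map, and hence compact. It is convex because it is a sublevel set of a convex function. Upper hemicontinuity of $\mathrm{BR}_1$ follows from Berge's maximum theorem: $J_1$ is jointly continuous and the constraint correspondence $x_2\mapsto\mathrm{X}_1$ is constant, nonempty and compact, hence continuous. I will also check the closed-graph property directly. Take $x_2^k\to x_2$, $x_1^k\in\mathrm{BR}_1(x_2^k)$ and $x_1^k\to x_1$. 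For any $y\in\mathrm{X}_1$ we have $J_1(x_1^k,x_2^k)\le J_1(y,x_2^k)$, and passing to the limit using joint continuity gives $x_1\in\mathrm{BR}_1(x_2)$.

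Next I will handle $\mathrm{BR}_2$ and the product. Assumption~\ref{assum:BR2_regular} gives nonempty, convex, compact values and upper hemicontinuity directly. For compact-valued correspondences into the compact Hausdorff space $\mathrm{X}_2$, upper hemicontinuity is equivalent to a closed graph. The product $\Phi$ therefore has nonempty, convex, compact values. Its graph is closed, since it is, up to a permutation of coordinates, the product of the graphs of $\mathrm{BR}_1$ and $\mathrm{BR}_2$. Kakutani's theorem then yields a fixed point, which completes the argument.

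The main obstacle is the upper hemicontinuity and closed-graph step. For $\mathrm{BR}_1$ the danger is overlooking that joint continuity of $J_1$ is needed, and not merely continuity in $x_1$. For $\mathrm{BR}_2$ the care needed is in converting the assumed upper hemicontinuity into the closed-graph form Kakutani uses, which requires the closed values and the compact codomain. The remaining steps are routine verifications.
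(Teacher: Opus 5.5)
Your proposal is correct and follows essentially the same route as the paper: it establishes that $\mathrm{BR}_1$ has nonempty, convex, compact values via Weierstrass and convexity, gets upper hemicontinuity from Berge's maximum theorem, forms the product correspondence $\mathrm{BR}_1(x_2)\times\mathrm{BR}_2(x_1)$, and applies Kakutani. Your direct closed-graph verification and the explicit conversion from upper hemicontinuity to closed graph are a bit more careful than the paper, which simply asserts that the product correspondence is upper hemicontinuous.
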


\begin{proof}
Define Player~1's best-response correspondence
\begin{equation}
\label{eq:BR1_def_exist}
\mathrm{BR}_1(x_2)\triangleq \arg\min_{x_1\in \mathrm{X}_1} J_1(x_1,x_2), \qquad x_2\in \mathrm{X}_2.
\end{equation}
Fix any $x_2\in \mathrm{X}_2$. By Assumption~\ref{assum:X_sets}, $\mathrm{X}_1$ is nonempty and compact, and by Assumption~\ref{assum:J1_cont_convex}, $J_1(\cdot,x_2)$ is continuous. Hence, by the Weierstrass extreme value theorem, $\mathrm{BR}_1(x_2)$ is nonempty and compact. Moreover, by convexity of $J_1(\cdot,x_2)$ (Assumption~\ref{assum:J1_cont_convex}), the set of minimizers $\mathrm{BR}_1(x_2)$ is convex.

Next, Assumption~\ref{assum:J1_cont_convex} ensures that $J_1$ is jointly continuous on $\mathrm{X}_1\times\mathrm{X}_2$, while Assumption~\ref{assum:X_sets} guarantees that $\mathrm{X}_1$ is compact. Therefore, Berge's maximum theorem implies that $\mathrm{BR}_1:\mathrm{X}_2\rightrightarrows \mathrm{X}_1$ is upper hemicontinuous and compact-valued.

Now define the product correspondence $F:\mathrm{X}\rightrightarrows \mathrm{X}$ on $\mathrm{X}=\mathrm{X}_1\times \mathrm{X}_2$ by
\begin{equation*}
\label{eq:F_def}
F(x_1,x_2)\triangleq \mathrm{BR}_1(x_2)\times \mathrm{BR}_2(x_1).
\end{equation*}
By the preceding arguments and Assumption~\ref{assum:BR2_regular}, for every $(x_1,x_2)\in \mathrm{X}$ the set $F(x_1,x_2)$ is nonempty, convex, and compact, and $F$ is upper hemicontinuous. Since $\mathrm{X}$ is nonempty, convex, and compact by Assumption~\ref{assum:X_sets}, Kakutani's fixed-point theorem guarantees the existence of $(x_1^\star,x_2^\star)\in \mathrm{X}$ such that
\begin{equation*}
\label{eq:fixed_point}
(x_1^\star,x_2^\star)\in F(x_1^\star,x_2^\star),
\end{equation*}
i.e., $x_1^\star\in \mathrm{BR}_1(x_2^\star)$ and $x_2^\star\in \mathrm{BR}_2(x_1^\star)$. By the definition of $\mathrm{BR}_1$ in \eqref{eq:BR1_def_exist}, the first inclusion is equivalent to
$x_1^\star \in \arg\min_{x_1\in\mathrm{X}_1} J_1(x_1,x_2^\star)$, and together with $x_2^\star\in \mathrm{BR}_2(x_1^\star)$ this matches Definition~\ref{def:nash}. Therefore, $(x_1^\star,x_2^\star)$ is a Nash equilibrium of Problem~\ref{prob:asym_game}.
\end{proof}

\vspace{0.3em}
To state a uniqueness condition aligned with our subsequent convergence analysis, we impose the following stronger assumptions.

\setcounter{innerassump}{0}
\begin{assump}{B}
\label{assum:mu-strongly convex}
The function $J_1:\mathrm{X}_1\times\mathrm{X}_2\to\mathbb{R}$ is continuous on $\mathrm{X}_1\times\mathrm{X}_2$. Moreover, for every fixed $x_2\in \mathrm{X}_2$, the function $J_1(\cdot,x_2)$ is differentiable and $\mu$-strongly convex on $\mathrm{X}_1$ for some $\mu>0$.
\end{assump}

\begin{assump}{B}
\label{assum:cross-dependence lipschitz}
For every fixed $x_1\in \mathrm{X}_1$, the mapping $x_2\mapsto \nabla_{x_1} J_1(x_1,x_2)$ is Lipschitz on $\mathrm{X}_2$ with constant $L_{12}\ge 0$, i.e., $\forall x_2,y_2\in \mathrm{X}_2$,
\begin{equation*}
    \|\nabla_{x_1}J_1(x_1,x_2)-\nabla_{x_1}J_1(x_1,y_2)\|\le L_{12}\|x_2-y_2\|.
\end{equation*}
\end{assump}

\vspace{0.3em}
\begin{assump}{B}
\label{assum:BR2_singleton}
For every $x_1\in \mathrm{X}_1$, the best response map $\mathrm{BR}_2(x_1)$ is single-valued.
\end{assump}

\vspace{0.3em}
\begin{assump}{B}
\label{assum:BR2_lipschitz}
The (single-valued) best response map $\mathrm{BR}_2:\mathrm{X}_1\to \mathrm{X}_2$ is Lipschitz on $\mathrm{X}_1$ with constant $L_2\ge 0$.
\end{assump}

\vspace{0.3em}
\begin{theorem}[Uniqueness of Nash equilibrium]
\label{thm:uniqueness}
Suppose Assumptions~\ref{assum:X_sets} and \ref{assum:mu-strongly convex}--\ref{assum:BR2_lipschitz} hold. If $\mu>L_{12}L_2$, then the Nash equilibrium of Problem~\ref{prob:asym_game} is unique.
\end{theorem}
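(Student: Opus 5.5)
The plan is to assume two Nash equilibria $(x_1^\star,x_2^\star)$ and $(y_1^\star,y_2^\star)$ and show they coincide. The argument combines the first-order optimality conditions for Player~1 with strong monotonicity of $\nabla_{x_1}J_1(\cdot,x_2)$, and uses the Lipschitz bounds to produce a contraction-type inequality. Existence itself is not needed here, since the statement only asserts uniqueness; note also that Assumption~\ref{assum:mu-strongly convex} implies Assumption~\ref{assum:J1_cont_convex}.

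\emph{Step 1: first-order conditions.} By Assumption~\ref{assum:BR2_singleton}, the second Nash condition reads $x_2^\star=\mathrm{BR}_2(x_1^\star)$ and $y_2^\star=\mathrm{BR}_2(y_1^\star)$. Since $J_1(\cdot,x_2^\star)$ is differentiable and convex on the convex set $\mathrm{X}_1$, the first Nash condition is equivalent to the variational inequality
\[
\langle \nabla_{x_1}J_1(x_1^\star,x_2^\star),\, z-x_1^\star\rangle\ge 0 \quad \forall z\in\mathrm{X}_1 ,
\]
and similarly for $y_1^\star$ with $y_2^\star$. I will take $z=y_1^\star$ in the first inequality and $z=x_1^\star$ in the second, then add them. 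This gives
\[
\langle \nabla_{x_1}J_1(x_1^\star,x_2^\star)-\nabla_{x_1}J_1(y_1^\star,y_2^\star),\, x_1^\star-y_1^\star\rangle\le 0 .
\]

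\emph{Step 2: split the gradient difference.} Add and subtract $\nabla_{x_1}J_1(y_1^\star,x_2^\star)$. By $\mu$-strong convexity of $J_1(\cdot,x_2^\star)$, the gradient is $\mu$-strongly monotone, so
\[
\langle \nabla_{x_1}J_1(x_1^\star,x_2^\star)-\nabla_{x_1}J_1(y_1^\star,x_2^\star),\,x_1^\star-y_1^\star\rangle\ge\mu\|x_1^\star-y_1^\star\|^2 .
\]
For the cross term, Cauchy--Schwarz together with Assumption~\ref{assum:cross-dependence lipschitz} (at fixed $y_1^\star$) and Assumption~\ref{assum:BR2_lipschitz} gives
\[
\bigl|\langle \nabla_{x_1}J_1(y_1^\star,x_2^\star)-\nabla_{x_1}J_1(y_1^\star,y_2^\star),\,x_1^\star-y_1^\star\rangle\bigr|\le L_{12}\|x_2^\star-y_2^\star\|\,\|x_1^\star-y_1^\star\|\le L_{12}L_2\|x_1^\star-y_1^\star\|^2 .
\]

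\emph{Step 3: conclude.} Combining Steps 1 and 2 yields $(\mu-L_{12}L_2)\|x_1^\star-y_1^\star\|^2\le 0$. Since $\mu>L_{12}L_2$, this forces $x_1^\star=y_1^\star$. Then $x_2^\star=\mathrm{BR}_2(x_1^\star)=\mathrm{BR}_2(y_1^\star)=y_2^\star$.

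The main point needing care is Step 1. Differentiability is only assumed on $\mathrm{X}_1$, possibly including its boundary, so the equivalence between minimization and the variational inequality must be justified with one-sided directional derivatives along feasible segments. Strong monotonicity of the gradient then follows from adding the two strong-convexity inequalities for $J_1(\cdot,x_2^\star)$ at $x_1^\star$ and $y_1^\star$.
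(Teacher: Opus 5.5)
Your proposal is correct and follows the paper's proof essentially step for step: the same summed first-order conditions, the same add--subtract of $\nabla_{x_1}J_1(y_1^\star,x_2^\star)$, and the same conclusion $(\mu-L_{12}L_2)\|x_1^\star-y_1^\star\|^2\le 0$ followed by $x_2^\star=\mathrm{BR}_2(x_1^\star)=\mathrm{BR}_2(y_1^\star)=y_2^\star$. The one difference is existence, which you dismiss. The paper also secures it, so that ``the'' equilibrium is well defined: it notes that Assumption~\ref{assum:mu-strongly convex} implies Assumption~\ref{assum:J1_cont_convex}, and that Assumptions~\ref{assum:BR2_singleton}--\ref{assum:BR2_lipschitz} make $\mathrm{BR}_2$ a continuous singleton-valued, hence upper hemicontinuous, correspondence, so Theorem~\ref{thm:existence} applies. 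You should add that one line.
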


\vspace{0.3em}
\begin{proof}
Assumption~\ref{assum:mu-strongly convex} implies Assumption~\ref{assum:J1_cont_convex}. In addition, Assumptions~\ref{assum:BR2_singleton} and~\ref{assum:BR2_lipschitz} imply that $\mathrm{BR}_2$ is a continuous single-valued map, hence upper hemicontinuous as a correspondence with nonempty, convex, and compact singleton values. Therefore, Assumption~\ref{assum:BR2_regular} also holds. Together with Assumption~\ref{assum:X_sets}, existence follows from Theorem~\ref{thm:existence}.

To prove uniqueness, let $(a_1,b_1)$ and $(a_2,b_2)$ be two Nash equilibria of Problem~\ref{prob:asym_game}. By Definition~\ref{def:nash} and Assumption~\ref{assum:BR2_singleton},
\begin{equation}
\label{eq:NE_pairs}
b_1=\mathrm{BR}_2(a_1), \qquad b_2=\mathrm{BR}_2(a_2).
\end{equation}
Since $a_1$ minimizes $J_1(\cdot,b_1)$ over the convex set $\mathrm{X}_1$, the first-order optimality condition for convex constrained problems yields
\[
\langle \nabla_{x_1}J_1(a_1,b_1),\, a_2-a_1\rangle \ge 0.
\]
Similarly, optimality of $a_2$ for $J_1(\cdot,b_2)$ implies
\[
\langle \nabla_{x_1}J_1(a_2,b_2),\, a_1-a_2\rangle \ge 0.
\]
Adding the two inequalities gives
\begin{equation}
\label{eq:FOC}
\langle \nabla_{x_1}J_1(a_1,b_1)-\nabla_{x_1}J_1(a_2,b_2),\, a_1-a_2\rangle \le 0.
\end{equation}

Add and subtract $\nabla_{x_1}J_1(a_2,b_1)$ inside the inner product in \eqref{eq:FOC}. By $\mu$-strong convexity of $J_1(\cdot,b_1)$ (Assumption~\ref{assum:mu-strongly convex}), we have
\begin{equation}
\label{eq:mu_term}
\begin{aligned}
\langle \nabla_{x_1}J_1(a_1,b_1)-\nabla_{x_1}J_1(a_2,b_1),\, &a_1-a_2\rangle 
\\&\ge \mu\|a_1-a_2\|^2.
\end{aligned}
\end{equation}
For the remaining term, Cauchy--Schwarz yields
\begin{equation*}
    \begin{aligned}
        \langle \nabla_{x_1}&J_1(a_2,b_1)-\nabla_{x_1}J_1(a_2,b_2),\, a_1-a_2\rangle
\\ &\ge -\|\nabla_{x_1}J_1(a_2,b_1)-\nabla_{x_1}J_1(a_2,b_2)\|\,\|a_1-a_2\|.
    \end{aligned}
\end{equation*}
Using Assumption~\ref{assum:cross-dependence lipschitz} and \eqref{eq:NE_pairs} together with Assumption~\ref{assum:BR2_lipschitz}, we obtain
\begin{equation}
\label{eq:cross_term}
\begin{aligned}
    \langle \nabla_{x_1}J_1(a_2,b_1)-\nabla_{x_1}&J_1(a_2,b_2),\, a_1-a_2\rangle
\\& \ge -L_{12}\|b_1-b_2\|\,\|a_1-a_2\|
\\& \ge -L_{12}L_2\|a_1-a_2\|^2.
\end{aligned}
\end{equation}
Combining \eqref{eq:FOC}--\eqref{eq:cross_term} yields
\[
(\mu-L_{12}L_2)\|a_1-a_2\|^2 \le 0.
\]
If $\mu>L_{12}L_2$, then $a_1=a_2$. Substituting into \eqref{eq:NE_pairs}, and given Assumption~\ref{assum:BR2_singleton}, gives $b_1=b_2$, hence $(a_1,b_1)=(a_2,b_2)$. Therefore, the Nash equilibrium is unique.
\end{proof}

The same dominance condition $\mu>L_{12}L_2$ also induces a monotonicity margin that will be central to the convergence analysis of the iterative scheme developed next.

\section{Algorithm \& Global Linear Convergence to the Equilibrium}
\label{sec:alg_conv}

Under Assumptions~\ref{assum:X_sets} and \ref{assum:mu-strongly convex}--\ref{assum:BR2_lipschitz}, and whenever $\mu>L_{12}L_2$, Problem~\ref{prob:asym_game} admits a unique Nash equilibrium by Theorem~\ref{thm:uniqueness}. We now study a projected best-response/gradient iteration and show that it converges globally and linearly to this equilibrium.

\vspace{0.3em}
\setcounter{innerassump}{0}
\begin{assump}{C}
\label{assum:J1 Lipschitz on x1}
For every fixed $x_2\in \mathrm{X}_2$, the function $J_1(\cdot,x_2)$ is $L_1$-smooth on $\mathrm{X}_1$, i.e., $\forall x_1,y_1\in \mathrm{X}_1$,
\begin{equation*}
\|\nabla_{x_1}J_1(x_1,x_2)-\nabla_{x_1}J_1(y_1,x_2)\|\le L_1\|x_1-y_1\|.
\end{equation*}
\end{assump}

\vspace{0.5em}
We consider the projected best-response/gradient iteration
\begin{equation}
\label{eq:iterations}
\begin{aligned}
x_2^k &= \mathrm{BR}_2(x_1^k),\\
x_1^{k+1} &= \Proj{x_1^k - \alpha \nabla_{x_1}J_1(x_1^k,x_2^k)},
\end{aligned}
\end{equation}
where $\Proj{\cdot}$ denotes the Euclidean projection onto $\mathrm{X}_1$.
Since $\mathrm{X}_1$ is closed and convex (Assumption~\ref{assum:X_sets}), the projection is nonexpansive:
\begin{equation}
\label{eq:nonexpansive projection}
    \|\Proj{u}-\Proj{v}\| \le \|u-v\|, \qquad \forall u,v\in\mathbb{R}^{n_1}.
\end{equation}

\vspace{0.3em}
\begin{theorem}[Global linear convergence]
\label{thm:linear_conv}
Suppose Assumptions~\ref{assum:X_sets}, \ref{assum:mu-strongly convex}--\ref{assum:BR2_lipschitz}, and \ref{assum:J1 Lipschitz on x1} hold. 
Let
\[
m \triangleq \mu - L_{12}L_2 \quad\text{and}\quad L_G \triangleq L_1 + L_{12}L_2,
\]
and note that $m>0$ follows from $\mu>L_{12}L_2$. 
Then, for any stepsize $\alpha$ satisfying
\begin{equation*}
\label{eq:alpha bound}
0<\alpha< \frac{2m}{L_G^2}
=\frac{2(\mu-L_{12}L_2)}{(L_1+L_{12}L_2)^2},
\end{equation*}
the iterates generated by \eqref{eq:iterations} satisfy
\[
\|x_1^k-x_1^\star\|\le \rho(\alpha)^k \|x_1^0-x_1^\star\|,
\]
\[
\|x_2^k-x_2^\star\|\le L_2\,\rho(\alpha)^k\|x_1^0-x_1^\star\|,
\]
where $(x_1^\star,x_2^\star)$ is the unique Nash equilibrium of Problem~\ref{prob:asym_game} and
\[
\rho(\alpha)\triangleq \sqrt{1-2\alpha m+\alpha^2L_G^2}\in(0,1).
\]
\end{theorem}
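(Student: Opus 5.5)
We reduce the iteration to a projected gradient scheme on $x_1$ alone and show that the composite map is strongly monotone and Lipschitz.

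\textbf{Reduced operator.} Define $G:\mathrm{X}_1\to\mathbb{R}^{n_1}$ by
\[
G(x_1)\triangleq \nabla_{x_1}J_1\big(x_1,\mathrm{BR}_2(x_1)\big).
\]
With this, \eqref{eq:iterations} is exactly $x_1^{k+1}=\Proj{x_1^k-\alpha G(x_1^k)}$.

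\textbf{Fixed point.} By Theorem~\ref{thm:uniqueness} there is a unique Nash equilibrium $(x_1^\star,x_2^\star)$, with $x_2^\star=\mathrm{BR}_2(x_1^\star)$. The first-order optimality of $x_1^\star$ for $J_1(\cdot,x_2^\star)$ over the convex set $\mathrm{X}_1$ is the variational inequality $\langle G(x_1^\star),y-x_1^\star\rangle\ge 0$ for all $y\in\mathrm{X}_1$. This is equivalent to the projection characterization $x_1^\star=\Proj{x_1^\star-\alpha G(x_1^\star)}$ for every $\alpha>0$. So $x_1^\star$ is a fixed point of the iteration map.

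\textbf{Strong monotonicity and Lipschitz continuity of $G$.} For $x,y\in\mathrm{X}_1$, split $G(x)-G(y)$ exactly as in the proof of Theorem~\ref{thm:uniqueness}, adding and subtracting $\nabla_{x_1}J_1(y,\mathrm{BR}_2(x))$.
\begin{itemize}
\item $\mu$-strong convexity (Assumption~\ref{assum:mu-strongly convex}) bounds the first part of the inner product below by $\mu\|x-y\|^2$.
\item Cauchy--Schwarz with Assumptions~\ref{assum:cross-dependence lipschitz} and~\ref{assum:BR2_lipschitz} bounds the second part below by $-L_{12}L_2\|x-y\|^2$.
\end{itemize}
Together these give $\langle G(x)-G(y),x-y\rangle\ge m\|x-y\|^2$. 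The same splitting with the triangle inequality, Assumption~\ref{assum:J1 Lipschitz on x1}, and Assumptions~\ref{assum:cross-dependence lipschitz}--\ref{assum:BR2_lipschitz} gives $\|G(x)-G(y)\|\le (L_1+L_{12}L_2)\|x-y\|=L_G\|x-y\|$.

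\textbf{Contraction.} Using \eqref{eq:nonexpansive projection} and the fixed-point identity,
\[
\|x_1^{k+1}-x_1^\star\|^2\le \|x_1^k-x_1^\star-\alpha(G(x_1^k)-G(x_1^\star))\|^2 .
\]
Expanding the square and applying the two bounds above yields
\[
\|x_1^{k+1}-x_1^\star\|^2\le (1-2\alpha m+\alpha^2L_G^2)\,\|x_1^k-x_1^\star\|^2=\rho(\alpha)^2\,\|x_1^k-x_1^\star\|^2 .
\]
Induction on $k$ gives the $x_1$ bound. Since $x_2^k-x_2^\star=\mathrm{BR}_2(x_1^k)-\mathrm{BR}_2(x_1^\star)$, the $x_2$ bound follows from Lipschitz continuity of $\mathrm{BR}_2$ with constant $L_2$.

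\textbf{Range of $\rho(\alpha)$.} This is the step that needs care. The condition $\rho(\alpha)<1$ is exactly $\alpha<2m/L_G^2$. For $\rho(\alpha)>0$, note that strong monotonicity and the Lipschitz bound together force $m\le L_G$. Then
\[
1-2\alpha m+\alpha^2L_G^2\ \ge\ 1-2\alpha m+\alpha^2m^2=(1-\alpha m)^2\ \ge 0 .
\]
Equality would require both $L_G=m$ and $\alpha=1/m$. Since $L_G-m=L_1-\mu+2L_{12}L_2$ and $L_1\ge\mu$, the only degenerate case is $L_{12}L_2=0$, $L_1=\mu$, $\alpha=1/\mu$. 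In that case $\rho=0$ and the method converges in one step, so the bound still holds and only the strict positivity in the statement fails. The relation $m\le L_G$ holds even when $\mathrm{X}_1$ is a single point, because it follows directly from $\mu\le L_1$.
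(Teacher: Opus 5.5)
Your argument is correct. Its core is exactly the paper's: the reduced operator $G$, the add--subtract splitting that gives $m$-strong monotonicity and $L_G$-Lipschitz continuity, nonexpansiveness of the projection, and expanding the square.

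The difference is in how the fixed point is tied to the equilibrium. The paper first shows that $T$ is a contraction on $\mathrm{X}_1$ and takes its unique fixed point $x_1^*$ from the Banach fixed-point theorem. It proves convergence to $x_1^*$, and only then uses the projection characterization plus \emph{sufficiency} of the first-order condition to show that $(x_1^*,\mathrm{BR}_2(x_1^*))$ is a Nash equilibrium. Theorem~\ref{thm:uniqueness} then identifies this point with $(x_1^\star,x_2^\star)$.

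You run the implication the other way. Starting from the equilibrium supplied by Theorem~\ref{thm:uniqueness}, you use \emph{necessity} of the first-order condition to show that $x_1^\star$ is a fixed point. After that you only need the one-sided estimate toward $x_1^\star$. Your route is shorter and needs neither Banach's theorem nor the sufficiency step. The cost is that the existence of the point you converge to rests on Kakutani via Theorem~\ref{thm:existence}. The paper's route instead builds the equilibrium directly from the contraction. Combined with your direction (every equilibrium is a fixed point of $T$), it would also give uniqueness directly, under the additional Assumption~\ref{assum:J1 Lipschitz on x1}.

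Your check that $\rho(\alpha)>0$ covers a step the paper only asserts. You are right that $L_{12}L_2=0$, $L_1=\mu$, $\alpha=1/\mu$ gives $\rho(\alpha)=0$, so the open interval in the statement is slightly too strong there, even though the bounds still hold.

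Your final sentence is wrong, though. The inequality $\mu\le L_1$ comes from comparing the two gradient bounds at two distinct points of $\mathrm{X}_1$. When $\mathrm{X}_1$ is a singleton, both assumptions are vacuous and nothing forces $\mu\le L_1$. For example, $\mu=2$, $L_1=1$, $L_{12}=0$, $\alpha=2<2m/L_G^2=4$ gives $1-2\alpha m+\alpha^2L_G^2=-3$. In that case the iterates are constant and the convergence claims are trivial, but $\rho(\alpha)$ is not even real. This is a defect of the statement in a degenerate case, not of your main argument.
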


\vspace{0.5em}
\begin{proof}
Define the map $T:\mathrm{X}_1\to\mathrm{X}_1$ by
\begin{equation}
    \label{eq:Tmap}
T(x_1) := \Proj{x_1 - \alpha \nabla_{x_1}J_1\big(x_1,\mathrm{BR}_2(x_1)\big)},
\end{equation}
and define the operator $G:\mathrm{X}_1\to\mathbb{R}^{n_1}$ as
\[
G(x_1) := \nabla_{x_1}J_1\big(x_1,\mathrm{BR}_2(x_1)\big).
\]
For any $x_1,y_1\in\mathrm{X}_1$, add and subtract $\nabla_{x_1}J_1(y_1,\mathrm{BR}_2(x_1))$ to obtain
\begin{equation*}
    \begin{aligned}
        \|G(&x_1)-G(y_1)\|\\
        &\le 
        \|\nabla_{x_1}J_1(x_1,\mathrm{BR}_2(x_1))-\nabla_{x_1}J_1(y_1,\mathrm{BR}_2(x_1))\| \\
        &\quad+
        \|\nabla_{x_1}J_1(y_1,\mathrm{BR}_2(x_1))-\nabla_{x_1}J_1(y_1,\mathrm{BR}_2(y_1))\|.
    \end{aligned}
\end{equation*}
By Assumption~\ref{assum:J1 Lipschitz on x1}, the first term is bounded by $L_1\|x_1-y_1\|$. 
By Assumptions~\ref{assum:cross-dependence lipschitz} and~\ref{assum:BR2_lipschitz}, the second term is bounded by
$L_{12}\|\mathrm{BR}_2(x_1)-\mathrm{BR}_2(y_1)\|\le L_{12}L_2\|x_1-y_1\|$. 
Therefore, $G$ is Lipschitz on $\mathrm{X}_1$ with constant $L_G=L_1+L_{12}L_2$, i.e.,
\begin{equation}
    \label{eq:LG}
\|G(x_1)-G(y_1)\|\le L_G\|x_1-y_1\|.
\end{equation}

Next, using the same add--subtract decomposition and Assumption~\ref{assum:mu-strongly convex}, we have
\begin{equation*}
    \begin{aligned}
        \langle &G(x_1)-G(y_1),\,x_1-y_1\rangle\\
        &=
        \langle \nabla_{x_1}J_1(x_1,\mathrm{BR}_2(x_1))-\nabla_{x_1}J_1(y_1,\mathrm{BR}_2(x_1)),\,x_1-y_1\rangle \\
        &\quad+
        \langle \nabla_{x_1}J_1(y_1,\mathrm{BR}_2(x_1))-\nabla_{x_1}J_1(y_1,\mathrm{BR}_2(y_1)),\,x_1-y_1\rangle.
    \end{aligned}
\end{equation*}
The first inner product is lower bounded by $\mu\|x_1-y_1\|^2$. For the second, Cauchy--Schwarz together with Assumptions~\ref{assum:cross-dependence lipschitz} and~\ref{assum:BR2_lipschitz} yields
\begin{equation*}
    \begin{aligned}
        \langle \nabla_{x_1}J_1(y_1,\mathrm{BR}_2(x_1))-\nabla_{x_1}J_1(y_1,& \mathrm{BR}_2(y_1)),\,x_1-y_1\rangle \\
&\ge -L_{12}L_2\|x_1-y_1\|^2.
    \end{aligned}
\end{equation*}
Thus, $G$ is strongly monotone on $\mathrm{X}_1$ with constant $m=\mu-L_{12}L_2>0$, i.e.,
\begin{equation}
    \label{eq:m-strong G}
\langle G(x_1)-G(y_1),\,x_1-y_1\rangle \ge m\|x_1-y_1\|^2.
\end{equation}

By nonexpansiveness of the projection \eqref{eq:nonexpansive projection},
\[
\|T(x_1)-T(y_1)\|
\le \|(x_1-y_1)-\alpha(G(x_1)-G(y_1))\|.
\]
Expanding the square and applying the Lipschitz and strong monotonicity bounds in \eqref{eq:LG} and \eqref{eq:m-strong G} gives
\begin{equation}
\label{eq:upperbound for T}
    \begin{aligned}
            \|T(x_1)-T(y_1)\|^2
\le \big(1-2&\alpha m+\alpha^2L_G^2\big)\|x_1-y_1\|^2\\
& \quad =: q(\alpha)\|x_1-y_1\|^2.
    \end{aligned}
\end{equation}
For $0<\alpha<2m/L_G^2$, we have $q(\alpha)\in(0,1)$, hence $T$ is a contraction with modulus $\rho(\alpha)=\sqrt{q(\alpha)}\in(0,1)$. 
Therefore, by the Banach fixed-point theorem, $T$ has a unique fixed point $x_1^*\in\mathrm{X}_1$ and the iterates satisfy
\begin{equation}
    \label{eq:global_linear_x1}
\|x_1^k-x_1^*\|\le \rho(\alpha)^k\|x_1^0-x_1^*\|.
\end{equation}

Let $x_2^*:=\mathrm{BR}_2(x_1^*)$. Since $\mathrm{BR}_2$ is Lipschitz (\ref{assum:BR2_lipschitz}),
\[
\norm{x_2^k-x_2^*}=\norm{\mathrm{BR}_2(x_1^k)-\mathrm{BR}_2(x_1^*)}\leq L_2\,\norm{x_1^k-x_1^*}.
\]

Combining with \eqref{eq:global_linear_x1} gives
\[
\norm{x_2^k-x_2^*}\leq L_2\,\rho(\alpha)^k\norm{x_1^0-x_1^*}.
\]

It remains to show that $x_1^*$ minimizes $J_1(\cdot,x_2^*)$ over $\mathrm{X}_1$. 
From $x_1^*=T(x_1^*)$, we have
\[
x_1^* = \Proj{x_1^*-\alpha\nabla_{x_1}J_1(x_1^*,x_2^*)}.
\]
By the characterization of Euclidean projection onto a closed convex set, $y=\Proj{z}$ if and only if
\[
\langle z-y,\,x-y\rangle \le 0,\qquad \forall x\in \mathrm{X}_1.
\]
Applying this with $y=x_1^*$ and $z=x_1^*-\alpha\nabla_{x_1}J_1(x_1^*,x_2^*)$ yields
\[
\langle \nabla_{x_1}J_1(x_1^*,x_2^*),\,x-x_1^*\rangle \ge 0,\qquad \forall x\in \mathrm{X}_1.
\]
Since $J_1(\cdot,x_2^*)$ is convex and differentiable (\ref{assum:mu-strongly convex}), this first-order condition is sufficient for optimality, hence
\[
x_1^*\in \arg\min_{x_1\in\mathrm{X}_1} J_1(x_1,x_2^*).
\]
Together with $x_2^*=\mathrm{BR}_2(x_1^*)$, Definition~\ref{def:nash} implies that $(x_1^*,x_2^*)$ is a Nash equilibrium. By uniqueness (Theorem~\ref{thm:uniqueness}), $(x_1^*,x_2^*)=(x_1^\star,x_2^\star)$, which completes the proof.
\end{proof}

\section{Inexact Best Response \& Robustness of the Algorithm}
\label{sec:inexact_br}
The convergence result in Section~\ref{sec:alg_conv} assumes access to the exact best-response map $\mathrm{BR}_2$. In practice, however, an agent typically has access only to an approximation or estimate of its opponent's reaction behavior. We therefore study the robustness of the projected best-response/gradient iteration when $\mathrm{BR}_2$ is replaced by an inexact map.

Let $\widehat{\mathrm{BR}}_2:\mathrm{X}_1 \to \mathrm{X}_2$ denote an approximation of $\mathrm{BR}_2$, and assume a uniform error bound: there exists $\varepsilon>0$ such that
\begin{equation}
\label{eq:approximate best response}
    \|\widehat{\mathrm{BR}}_2(x_1)-\mathrm{BR}_2(x_1)\|\le \varepsilon, \quad \forall x_1\in \mathrm{X}_1.
\end{equation}

We consider the inexact variant of \eqref{eq:iterations}:
\begin{equation}
\label{eq:iterations_inexact}
\begin{aligned}
        &x_2^k = \widehat{\mathrm{BR}}_2(x_1^k), \\ 
        &x_1^{k+1} = \Proj{x_1^k - \alpha \nabla_{x_1}J_1(x_1^k,x_2^k)}.
\end{aligned}
\end{equation}


\vspace{0.3em}
\begin{theorem}[Robustness to inexact best response]
\label{thm:robust_inexact}
Let $(x_1^\star,x_2^\star)\in \mathrm{X}$ denote the unique Nash equilibrium of Problem~\ref{prob:asym_game} under Assumptions~\ref{assum:X_sets},~\ref{assum:mu-strongly convex}--\ref{assum:BR2_lipschitz}, and~\ref{assum:J1 Lipschitz on x1}, and suppose $\mu>L_{12}L_2$. Let
\[
m\triangleq \mu-L_{12}L_2,
\qquad
L_G\triangleq L_1+L_{12}L_2,
\]
and choose $\alpha\in(0,2m/L_G^2)$. If $\widehat{\mathrm{BR}}_2$ satisfies \eqref{eq:approximate best response}, then the iterates \eqref{eq:iterations_inexact} satisfy, for all $k\ge 0$,
\begin{equation}
\label{eq:robust_recursion}
\|x_1^{k+1}-x_1^\star\|\le \rho(\alpha)\|x_1^k-x_1^\star\| + \alpha L_{12}\varepsilon,
\end{equation}
where
\[
\rho(\alpha)=\sqrt{1-2\alpha m+\alpha^2L_G^2}\in(0,1).
\]
Consequently,
\begin{equation}
\label{eq:limsup_x1}
\limsup_{k\to\infty}\|x_1^k-x_1^\star\|
\le
\frac{\alpha L_{12}}{1-\rho(\alpha)}\,\varepsilon.
\end{equation}
Moreover, the sequence $x_2^k=\widehat{\mathrm{BR}}_2(x_1^k)$ satisfies
\begin{equation}
\label{eq:x2_bound}
\|x_2^k-x_2^\star\|
\le
L_2\|x_1^k-x_1^\star\|+\varepsilon,
\end{equation}
and hence
\begin{equation}
\label{eq:limsup_x2}
\limsup_{k\to\infty}\|x_2^k-x_2^\star\|
\le
\big(L_2\frac{\alpha L_{12}}{1-\rho(\alpha)}+1\big)\varepsilon.
\end{equation}
\end{theorem}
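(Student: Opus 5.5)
The plan is to compare the inexact update with the exact contraction $T$ from \eqref{eq:Tmap} in the proof of Theorem~\ref{thm:linear_conv}. We use the fact that $x_1^\star$ is its fixed point: $x_1^\star=T(x_1^\star)$. Write the inexact step as
\[
x_1^{k+1}=\Proj{x_1^k-\alpha\nabla_{x_1}J_1(x_1^k,\widehat{\mathrm{BR}}_2(x_1^k))}.
\]
By the triangle inequality,
\[
\|x_1^{k+1}-x_1^\star\|\le \|x_1^{k+1}-T(x_1^k)\|+\|T(x_1^k)-T(x_1^\star)\|.
\]
The second term is bounded by $\rho(\alpha)\|x_1^k-x_1^\star\|$, which is exactly the contraction estimate \eqref{eq:upperbound for T}.

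For the first term, both $x_1^{k+1}$ and $T(x_1^k)$ are projections of points sharing the same base $x_1^k$. So the nonexpansiveness \eqref{eq:nonexpansive projection} gives
\[
\|x_1^{k+1}-T(x_1^k)\|\le \alpha\|\nabla_{x_1}J_1(x_1^k,\widehat{\mathrm{BR}}_2(x_1^k))-\nabla_{x_1}J_1(x_1^k,\mathrm{BR}_2(x_1^k))\|.
\]
Both arguments $\widehat{\mathrm{BR}}_2(x_1^k)$ and $\mathrm{BR}_2(x_1^k)$ lie in $\mathrm{X}_2$. Hence Assumption~\ref{assum:cross-dependence lipschitz} applies with $x_1=x_1^k$ fixed. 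Together with \eqref{eq:approximate best response}, this bounds the first term by $\alpha L_{12}\varepsilon$ and yields \eqref{eq:robust_recursion}.

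Next, unroll the recursion. Let $e_k=\|x_1^k-x_1^\star\|$, so that $e_{k+1}\le\rho e_k+c$ with $c=\alpha L_{12}\varepsilon$. Induction gives
\[
e_k\le \rho^k e_0+c\sum_{j=0}^{k-1}\rho^j\le \rho^k e_0+\frac{c}{1-\rho}.
\]
Since $\rho(\alpha)\in(0,1)$ by Theorem~\ref{thm:linear_conv}, we have $\rho^k\to0$, and taking $\limsup$ gives \eqref{eq:limsup_x1}.

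For \eqref{eq:x2_bound}, use $x_2^\star=\mathrm{BR}_2(x_1^\star)$, which holds by Definition~\ref{def:nash} and Assumption~\ref{assum:BR2_singleton}. Insert $\mathrm{BR}_2(x_1^k)$:
\[
\|x_2^k-x_2^\star\|\le\|\widehat{\mathrm{BR}}_2(x_1^k)-\mathrm{BR}_2(x_1^k)\|+\|\mathrm{BR}_2(x_1^k)-\mathrm{BR}_2(x_1^\star)\|\le \varepsilon+L_2\|x_1^k-x_1^\star\|,
\]
where the last step uses \eqref{eq:approximate best response} and Assumption~\ref{assum:BR2_lipschitz}. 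Taking $\limsup$ and applying \eqref{eq:limsup_x1} gives \eqref{eq:limsup_x2}.

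The one genuinely delicate point is the choice of comparison operator. We compare with the exact map $T$ evaluated at $x_1^k$, rather than trying to show that the inexact map is itself a contraction. The inexact map need not be a contraction, since $\widehat{\mathrm{BR}}_2$ is not assumed Lipschitz or even continuous. The triangle-inequality split above avoids this entirely: the perturbation enters only through a single gradient evaluation at a common $x_1^k$. The remaining steps are routine.
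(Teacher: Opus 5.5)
Your proposal is correct and follows essentially the same argument as the paper. You compare the inexact step with the exact contraction $T$ evaluated at $x_1^k$, and use nonexpansiveness together with Assumption~\ref{assum:cross-dependence lipschitz} to bound the gap by $\alpha L_{12}\varepsilon$. You then unroll the recursion as the paper does, and obtain the Player~2 bound by inserting $\mathrm{BR}_2(x_1^k)$ in the triangle inequality, exactly as the paper's proof does.
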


\vspace{0.5em}
\begin{proof}
Let $T:\mathrm{X}_1\to\mathrm{X}_1$ denote the exact map from \eqref{eq:Tmap} and define the inexact map $\widehat{T}:\mathrm{X}_1\to\mathrm{X}_1$ by
\begin{equation*}
\label{eq:That_def}
\widehat{T}(x_1) := \Proj{x_1 - \alpha \nabla_{x_1}J_1\big(x_1,\widehat{\mathrm{BR}}_2(x_1)\big)}.
\end{equation*}
By Theorem~\ref{thm:linear_conv}, $T$ is a contraction on $\mathrm{X}_1$ with modulus $\rho(\alpha)\in(0,1)$, i.e. \eqref{eq:upperbound for T}.
Moreover, using nonexpansiveness of the projection \eqref{eq:nonexpansive projection}, Assumption~\ref{assum:cross-dependence lipschitz}, and \eqref{eq:approximate best response}, we obtain, for any $x_1\in\mathrm{X}_1$,
\begin{equation}
\label{eq:T_hat_minus_T}
    \begin{aligned}
    \|\widehat{T}&(x_1)-T(x_1)\|\\
    &\le \alpha\left\|\nabla_{x_1}J_1\big(x_1,\widehat{\mathrm{BR}}_2(x_1)\big)-\nabla_{x_1}J_1\big(x_1,\mathrm{BR}_2(x_1)\big)\right\| \\
    &\le \alpha L_{12}\|\widehat{\mathrm{BR}}_2(x_1)-\mathrm{BR}_2(x_1)\|
    \le \alpha L_{12}\varepsilon.
    \end{aligned}
\end{equation}

Now, since $x_1^{k+1}=\widehat{T}(x_1^k)$ and $x_1^\star=T(x_1^\star)$, we have
\begin{equation*}
    \begin{aligned}
\|x_1^{k+1}-x_1^\star\|
&=\|\widehat{T}(x_1^k)-T(x_1^\star)\|\\
&\le \|\widehat{T}(x_1^k)-T(x_1^k)\|+\|T(x_1^k)-T(x_1^\star)\|.
    \end{aligned}
\end{equation*}

Using the contraction property of $T$ from Theorem~\ref{thm:linear_conv} together with \eqref{eq:T_hat_minus_T}, immediately proves \eqref{eq:robust_recursion}.
Iterating \eqref{eq:robust_recursion} yields
\[
\|x_1^k-x_1^\star\|
\le
\rho(\alpha)^k\|x_1^0-x_1^\star\|
+
\alpha L_{12}\varepsilon\sum_{i=0}^{k-1}\rho(\alpha)^i.
\]
Since $\rho(\alpha)\in(0,1)$,
\[
\sum_{i=0}^{k-1}\rho(\alpha)^i
=
\frac{1-\rho(\alpha)^k}{1-\rho(\alpha)},
\]
and therefore, for all $k\ge 0$,
\begin{equation*}
\label{eq:robust_finite_bound}
\|x_1^k-x_1^\star\|
\le
\rho(\alpha)^k\|x_1^0-x_1^\star\|
+
\frac{1-\rho(\alpha)^k}{1-\rho(\alpha)}\,\alpha L_{12}\varepsilon.
\end{equation*}
Taking $\limsup$ as $k\to\infty$ gives \eqref{eq:limsup_x1}.

Finally, since $x_2^k=\widehat{\mathrm{BR}}_2(x_1^k)$ and $x_2^\star=\mathrm{BR}_2(x_1^\star)$, we write
\begin{equation*}
    \begin{aligned}
    \|x_2^k-x_2^\star\|
    &=\|\widehat{\mathrm{BR}}_2(x_1^k)-\mathrm{BR}_2(x_1^\star)\|\\
    &\le \|\widehat{\mathrm{BR}}_2(x_1^k)-\mathrm{BR}_2(x_1^k)\|\\
    & \quad +\|\mathrm{BR}_2(x_1^k)-\mathrm{BR}_2(x_1^\star)\|.
    \end{aligned}
\end{equation*}
The first term is bounded by $\varepsilon$ via \eqref{eq:approximate best response}, and the second is bounded by $L_2\|x_1^k-x_1^\star\|$ via Assumption~\ref{assum:BR2_lipschitz}, proving \eqref{eq:x2_bound}. Combining \eqref{eq:x2_bound} with \eqref{eq:limsup_x1} yields \eqref{eq:limsup_x2} and concludes the proof.
\end{proof}

Theorem~\ref{thm:robust_inexact} shows that under a uniform approximation error in the opponent's best-response map, the projected best-response/gradient iteration remains practically stable: the Player~1 iterate ultimately enters an $O(\varepsilon)$ neighborhood of the exact Nash equilibrium, and the resulting Player~2 estimate remains within an $O(\varepsilon)$ neighborhood of $x_2^\star$ as well. If, in addition, $\widehat{\mathrm{BR}}_2$ is single-valued and Lipschitz on $\mathrm{X}_1$ with constant $\widehat L_2$ such that $\mu>L_{12}\widehat L_2$, then the same contraction argument as in Theorem~\ref{thm:linear_conv} applies with $\mathrm{BR}_2$ replaced by $\widehat{\mathrm{BR}}_2$, implying linear convergence of \eqref{eq:iterations_inexact} to the unique equilibrium of the perturbed game induced by $\widehat{\mathrm{BR}}_2$.

\section{Numerical Example}
\label{sec:example}

We illustrate the theoretical results on a one-dimensional \emph{tug-of-war} cart in which two agents apply opposing longitudinal forces to a point mass with viscous drag. Player~1 selects an open-loop control sequence $u$, while Player~2 reacts through a best-response map. The example is designed to instantiate the assumptions of Theorems~\ref{thm:existence}--\ref{thm:robust_inexact} and to illustrate exact convergence, robustness to inexact best-response information, and the predicted $O(\varepsilon)$ neighborhood scaling.


Let $z_t=[p_t,\;\dot p_t]^\top\in\mathbb{R}^2$ denote position and velocity of the cart at time step $t$. With sampling time $\Delta t$, mass $m$, and viscous drag coefficient $b$, the discrete-time dynamics are
\begin{equation*}
\label{eq:ex_dyn}
z_{t+1}=A_d z_t + B_u u_t + B_v v_t,\qquad t=0,\ldots,N-1,
\end{equation*}
where $z_0=[0,0]^\top$ and
\[
A_d=\begin{bmatrix}1&\Delta t\\0&1-\Delta t(b/m)\end{bmatrix},\qquad
B_u=-B_v=\begin{bmatrix}0\\ \Delta t/m\end{bmatrix}.
\]
Stacking the horizon variables as
\[
u=[u_0,\ldots,u_{N-1}]^\top,\qquad
v=[v_0,\ldots,v_{N-1}]^\top,
\]
results in the predicted state trajectory of the affine form
\[
Z(u,v)=S_0+S_u u+S_v v.
\]
\begin{figure}[ht]
    \centering
    \includegraphics[width=1\linewidth]{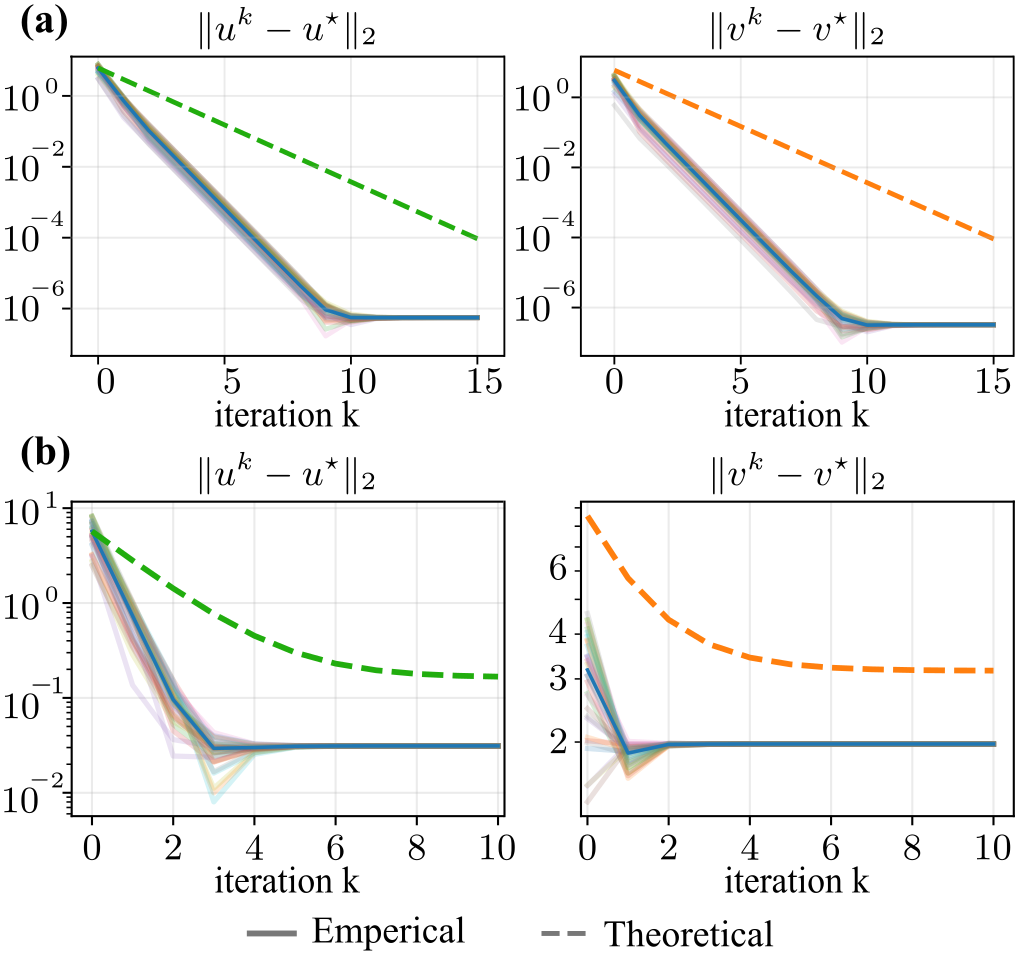}
    \caption{Convergence under exact and inexact best responses. (a) exact best response gives linear convergence from multiple initializations; (b) inexact best response yields convergence to an $\varepsilon$-dependent neighborhood.}    
    \label{fig:exact_conv_uv}
\end{figure}
Player~1 selects $u\in\mathrm{X}_1$ to minimize
\begin{equation*}
\label{eq:ex_J1}
\begin{aligned}
J_1(u,v)
=
\frac{q_u}{2}\|u\|_2^2
&+\frac{1}{2} Z(u,v)^\top Q Z(u,v)\\
&-r_{\mathrm{pull}}\, C_x^\top Z(u,v)
+\gamma\,u^\top v,
\end{aligned}
\end{equation*}
where $Q=\mathrm{blkdiag}(q_{\mathrm{pos}},q_{\mathrm{vel}},\ldots,q_{\mathrm{pos}},q_{\mathrm{vel}})\succeq0$ penalizes predicted states, and $C_x$ selects position components so that $C_x^\top Z(u,v)=\sum_{t=1}^{N} p_t$. The feasible set of Player~1 is the box
\begin{equation*}
\label{eq:ex_X1}
\mathrm{X}_1 := \{u\in\mathbb{R}^{N}\mid u_{\min}\le u_t\le u_{\max},\; t=0,\ldots,N-1\}.
\end{equation*}

Player~2 is modeled through a stagewise best-response map
\[
\mathrm{BR}_2(u) = v_{\max}\tanh\!\big(\kappa\,(v_{\mathrm{bar}}+c_{\mathrm{react}}\,u)\big),
\]
which captures a saturating reaction. Consistent with the asymmetric-information formulation of Problem~\ref{prob:asym_game}, Player~2 is specified here only through its reaction map $\mathrm{BR}_2(\cdot)$; an explicit objective model $J_2$ is not required for the example.
Since $|\mathrm{BR}_2(u)_t|\le v_{\max}$, we take
\[
\mathrm{X}_2 := \{v\in\mathbb{R}^{N}\mid \|v\|_\infty \le v_{\max}\}.
\]
The constants $(\mu,L_1,L_{12},L_2)$ are computed analytically from the quadratic cost and the slope bounds of the reaction map using eigenvalue and spectral-norm bounds; in particular, $\mu$ and $L_1$ come from the Hessian with respect to $u$.


We use
$N=5$, $|u|\leq 3$, $(\Delta t,m,b)=(0.2,1.0,0.6)$,
$(q_u,q_{\mathrm{pos}},q_{\mathrm{vel}})=(0.35,0.32,0.01)$,
$(r_{\mathrm{pull}},\gamma)=(5.4,0.01)$, and
$(v_{\max},\kappa,v_{\mathrm{bar}},c_{\mathrm{react}})=(1.8,0.6,0.9,0.9)$.
These yield
\[
\mu=0.3502,\,
L_1=0.3725,\,
L_{12}=0.0125,\,
L_2=0.9720,
\]
so $\mu>L_{12}L_2$ holds and the Nash equilibrium is unique by Theorem~\ref{thm:uniqueness}. The corresponding stepsize limit is $\alpha_{\max}=4.57$.

We first consider the exact iteration \eqref{eq:iterations} with $\alpha=0.5\,\alpha_{\max}$, which gives $\rho(\alpha)=0.48<1$. Fig.~\ref{fig:exact_conv_uv}(a) shows $\|u^k-u^\star\|_2$ and $\|v^k-v^\star\|_2$ from multiple random initializations, together with the theoretical decay references from Theorem~\ref{thm:linear_conv}. A reference equilibrium $(u^\star,v^\star)$ is computed by iterating the exact map to a tight tolerance. The observed trajectories exhibit the predicted linear convergence, while the late-stage flattening is due to numerical tolerance.

To model limited access to $\mathrm{BR}_2(\cdot)$, we approximate the scalar map
$b(u)=v_{\max}\tanh\!\big(\kappa(v_{\mathrm{bar}}+c_{\mathrm{react}}u)\big)$ by its first-order Taylor expansion about $u=0$,
\begin{equation*}
\label{eq:ex_taylor}
\widehat{b}(u)=v_{\max}\tanh(\kappa v_{\mathrm{bar}}) + \big[v_{\max}(\kappa c_{\mathrm{react}})\,\mathrm{sech}^2(\kappa v_{\mathrm{bar}})\big]\,u,
\end{equation*}
and define $\widehat{\mathrm{BR}}_2(u)$ by applying $\widehat{b}(\cdot)$ elementwise. 
Letting $\varepsilon_{\mathrm{stage}} \triangleq \sup_{u\in\mathrm{U}}|\widehat{b}(u)-b(u)|$,
the induced uniform $\ell_2$ approximation error satisfies
\begin{equation*}
\label{eq:ex_eps}
\|\widehat{\mathrm{BR}}_2(u)-\mathrm{BR}_2(u)\|_2 \le \varepsilon \triangleq \sqrt{N}\,\varepsilon_{\mathrm{stage}},\quad \forall u\in\mathrm{X}_1.
\end{equation*}

We then run the inexact iteration \eqref{eq:iterations_inexact} with the same stepsize. Fig.~\ref{fig:exact_conv_uv}(b) shows that the iterates no longer converge to the exact Nash equilibrium, but instead enter and remain in a neighborhood, in agreement with Theorem~\ref{thm:robust_inexact}. For the present parameters, the asymptotic bounds are
\[
\limsup_{k\to\infty}\|u^k-u^\star\|_2 \le 0.1641,\quad
\limsup_{k\to\infty}\|v^k-v^\star\|_2 \le 3.16.
\]
The bound on $v^k$ is conservative because \eqref{eq:x2_bound} contains the additive approximation term $\varepsilon$ directly.


Finally, to validate the predicted $O(\varepsilon)$ scaling, we sweep the uniform error level $\varepsilon$ by injecting a constant additive perturbation $d\in\mathbb{R}^N$ into the best-response oracle, i.e., $\widehat{\mathrm{BR}}_2(u)=\mathrm{BR}_2(u)+d$ with $\|d\|_2=\varepsilon$, and estimate
\[
\Delta_u(\varepsilon)\triangleq \limsup_{k\to\infty}\|u^k-u^\star\|_2,
\]
using a tail-limsup over the final iterations. Fig.~\ref{fig:fig_neighborhood_vs_epsilon} shows that the measured steady-state deviation scales approximately linearly with $\varepsilon$ and remains below the theoretical envelopes
\[
R_u(\varepsilon)=\frac{\alpha L_{12}}{1-\rho(\alpha)}\,\varepsilon,
\qquad
R_v(\varepsilon)=L_2R_u(\varepsilon)+\varepsilon.
\]
The corresponding Player~2 deviation exhibits the same linear scaling up to the additive approximation term.

\begin{figure}
    \centering
    \includegraphics[width=1\linewidth]{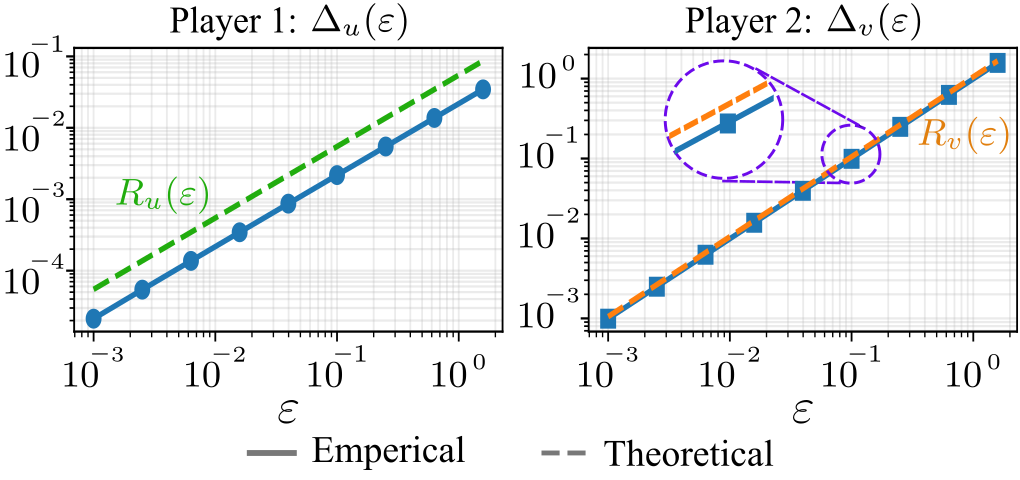}
    \caption{$O(\varepsilon)$ scaling of the steady-state deviation.} 
    \label{fig:fig_neighborhood_vs_epsilon}
\end{figure}

\section{Conclusion}
This paper studied a class of constrained two-player games under asymmetric information, where Player~1 knows its own objective and constraints while Player~2 is represented only through a best-response map. We established existence of a Nash equilibrium, derived a sufficient uniqueness condition, and proposed a projected gradient descent--best response iteration with global linear convergence to the unique equilibrium when the best-response map is exact. For the practically relevant inexact case, we showed that a uniformly bounded best-response approximation error yields an explicit $O(\varepsilon)$ neighborhood bound around the true Nash equilibrium. Numerical results corroborated the predicted exact-case convergence and inexact robustness behavior. Overall, the paper shows that equilibrium seeking can remain both analyzable and robust even without an explicit opponent model.


Several directions remain for future work. One is to extend the framework beyond decoupled feasible sets to games with coupled constraints, which arise naturally in shared-resource and safety-critical multi-agent settings. Another is to relax the sufficient condition $\mu>L_{12}L_2$, which can be restrictive in strongly coupled interactions.
\section*{Acknowledgment}
The authors used a generative AI tool for language editing and grammar refinement only. All technical content, analysis, and conclusions were verified and finalized by the authors.

\bibliographystyle{IEEEtran}
\bibliography{reference}

\end{document}